\documentclass[letterpaper,10pt,conference]{ieeeconf} 

\IEEEoverridecommandlockouts         
\usepackage{graphicx}                
\usepackage{mathptmx}                

\usepackage{amsmath,amssymb,mathtools} 
\usepackage{cancel}                  
\usepackage{physics}                 

\usepackage{tikz}
\usetikzlibrary{decorations.pathmorphing}
\usepackage{circuitikz} 

\usepackage{makecell}                
\usepackage{comment}                 

\let\labelindent\relax               
\usepackage{enumitem}                

\usepackage{amsthm}
\theoremstyle{plain}
\newtheorem{theorem}{Theorem}

\makeatletter
\let\NAT@parse\undefined
\makeatother
\usepackage[numbers,sort&compress]{natbib}

\usepackage{xcolor}
\definecolor{linkblue}{RGB}{0,0,150}

\usepackage[
  colorlinks=true,
  linkcolor=red,
  citecolor=blue,
  urlcolor=linkblue
]{hyperref}

\title{\LARGE \bf
Small HVAC Control Demonstrations in Larger Buildings Often Overestimate Savings
}

\author{Arash J. Khabbazi and Kevin J. Kircher%
\thanks{The authors are with the School of Mechanical Engineering,
Purdue University, West Lafayette, IN 47907, USA. 
{\tt\small \{arashjkh,kkirche\}@purdue.edu}}}

\begin{document}

\maketitle
\thispagestyle{empty}
\pagestyle{empty}

\begin{abstract}
How much energy, money, and emissions can advanced control of heating and cooling equipment save in real buildings? To address this question, researchers sometimes control a small number of thermal zones within a larger multi-zone building, then report savings for the controlled zones only. That approach can overestimate savings by neglecting heat transfer between controlled zones and adjacent zones. This paper mathematically characterizes the overestimation error when the dynamics are linear and the objectives are linear in the thermal load, as usually holds when optimizing energy efficiency, energy costs, or emissions. Overestimation errors can be large even in seemingly innocuous situations. For example, when controlling only interior zones that have no direct thermal contact with the outdoors, all perceived savings are fictitious. This paper provides an alternative estimation method based on the controlled and adjacent zones' temperature measurements. The new method does not require estimating how much energy the building would have used under baseline operations, so it removes the additional measurement and verification challenge of accurate baseline estimation.
\end{abstract}

\section{Introduction}

Residential and commercial heating, ventilation, and air conditioning (HVAC) systems give rise to about 15\% of anthropogenic greenhouse gas emissions and to energy costs on the order of \$1 trillion per year \cite{Khabbazi-2025-Field-HVAC-MPC-RL}. A large body of simulation research suggests that advanced HVAC control methods, such as model predictive control (MPC; see \cite{Serale2018-ph}, \cite{Drgona2020-fh}) and reinforcement learning (RL; see \cite{Vazquez-Canteli2019-ey}, \cite{Wang2020-lx}), could reduce energy costs, emissions, and peaks in electricity demand. However, advanced HVAC control systems have not seen much real-world adoption \cite{Henze-etal-2024}. This is partly because decision-makers in business and government may doubt that advanced control systems can reliably deliver persistent real-world savings on par with simulation results. Field demonstrations can help convince decision-makers in business and government that advanced HVAC control can actually deliver significant benefits in the real world \cite{Khabbazi-etal-2024-Herrick}. Researchers to date have done a relatively small number of field demonstrations, highlighting the need for further studies \cite{Pergantis-etal-2024a}. Field demonstrations must also be trustworthy, which requires reliable measurement and verification (M\&V).

Unfortunately, a recent review covering field demonstrations of advanced HVAC control revealed a systematic problem with M\&V \cite{Khabbazi-2025-Field-HVAC-MPC-RL}. Specifically, researchers sometimes control a small number of thermal zones within a larger multi-zone building, then report energy savings for the controlled zones only. That approach risks overestimating performance improvements by neglecting heat transfer between controlled zones and adjacent zones. The simple example of heating one conference room within a large office building illustrates this M\&V problem. If an advanced control system reduces the temperature in the conference room (to save energy during an unoccupied period, for example), then heat will flow into the conference room from adjacent zones. Maintaining the default temperatures in the adjacent zones will then require more energy to compensate for the additional heat loss to the conference room. Failure to account for the additional energy used in adjacent zones will lead to overestimating the net benefits of advanced control. Of the 80 total field demonstration papers in commercial buildings reviewed in \cite{Khabbazi-2025-Field-HVAC-MPC-RL}, only 34 (43\%) controlled a whole building. Of the remaining 46 papers (57\%) that did not control a whole building, no paper explicitly accounted for heat transfer with adjacent zones when reporting energy or cost savings.

The traces of this M\&V problem date back to the early 1990s, when Ruud et al. \cite{Ruud1990} tried to reduce thermal interactions between their controlled test floor and adjacent floors by pre-cooling the adjacent floors. However, Ruud et al. found strong convective flows from warmer zones, which contributed to discrepancies in measured cooling energy. In 2002, Braun et al. \cite{Braun2002} found that cross-zone heat transfer significantly influenced estimates of energy savings. They addressed this by expanding their control system to include the whole building. In 2003, Braun \cite{Braun2003} noted that thermal coupling between controlled and adjacent zones reduced the effectiveness of pre-cooling strategies and required careful experimental design to prevent misleading outcomes. In 2005, Henze et al. \cite{Henze2005} also highlighted this problem when field-testing their MPC system. They controlled all test rooms uniformly and removed the general area from performance evaluation due to heat transfer with the controlled spaces. Early research on RL by Liu and Henze \cite{Liu2006} also acknowledged significant thermal coupling between controlled zones and adjacent uncontrolled zones. 

While this M\&V issue was noticed and partially corrected for in early field demonstrations, a wide range of more recent field demonstrations have reverted to neglecting cross-zone heat transfer and potentially overestimating savings. This reversion could be because no paper has carefully explained the underlying cause of the issue and how to avoid it. This paper aims to fill that gap by mathematically characterizing the causes and extent of overestimation errors when controlling a small number of thermal zones within a larger multi-zone building. This paper also develops a new method to eliminate overestimation errors by accounting for cross-zone heat transfer. The new method requires only zone air temperature measurements, which are readily available in most buildings, and estimates of thermal conductances between zones. The new method does not require estimating an energy baseline, so it removes a recurring M\&V challenge.

Section~\ref{theorem} of this paper describes the assumptions and thermal model used for analysis. It then presents a theorem that characterizes the overestimation error and proposes a corrected savings estimation method. Section~\ref{numerical-example} presents an illustrative numerical example. Section~\ref{limitations} discusses limitations and directions for future work. Section~\ref{conclusion} concludes the paper. The appendix proves Theorem~\ref{theorem-1}. 

\section{Main Theoretical Results}
\label{theorem}

\subsection{Assumptions and Model}
\label{model} 

Thermal circuits model heat transfer by analogy to electrical circuits. In this analogy, heat plays the role of charge and temperature plays the role of voltage. Thermal resistances and capacitances replace their electrical analogues. A thermal capacitance can represent the air inside a room or a group of adjacent rooms; or it can represent the thermal mass associated with building materials, furnishings, {\it etc.} Heat transfers through network connections that represent conduction, convection, or radiation. Thermal resistances represent things that impede heat transfer, such as wall insulation.

This paper considers an arbitrary thermal network topology with an arbitrary number $n$ of thermal zones. The positive integer $n$ could be very large if, for example, the thermal network represents a high-resolution spatial discretization of the partial differential equations that govern heat transfer through continuous media in multiple dimensions. Fig.~\ref{thermalCircuits} shows a generic zone $i$ within the arbitrary $n$-zone building model considered here. In this model, the temperature in each thermal zone $i = 1$, \dots, $n$ satisfies
\begin{equation}
\begin{aligned}
C_i \frac{\text d T_i(t)}{\text d t} &= \sum_{j=0}^n \alpha_{ij} ( T_j(t) - T_i(t) ) + q_i(t) + w_i(t) . \label{baselineDynamics} \\
\end{aligned}
\end{equation}
Here $t \in [0,\tau]$ (h) denotes time, $\tau$ (h) is the final time, $T_i(t)$ ($^\circ$C) is the temperature of zone $i$ (with the outdoor air viewed as `zone zero'), $\alpha_{ij}$ (kW/$^\circ$C) is the thermal conductance (the inverse of the thermal resistance $R_{ij}$ [$^\circ$C/kW]) between zones $i$ and $j$, $q_i(t)$ (kW) is the thermal power delivered to zone $i$ by heating or cooling equipment, and $w_i(t)$ (kW) is the thermal power delivered to zone $i$ by exogenous sources such as lights, electronics, body heat, and the sun. The controlled thermal power $q_i(t)$ is positive for heating and negative for cooling. If no equipment supplies heat directly to (or removes heat directly from) zone $i$, then $q_i(t) = 0$ for all $t$. In buildings with significant thermal mass, for example, the thermal mass associated with a wall or floor can be represented as a zone with $q_i(t) \equiv 0$. If zone $i$ has no direct thermal contact with zone $j$, then $\alpha_{ij} = 0$. By symmetry, $\alpha_{ij} = \alpha_{ji}$ for all $(i,j)$.

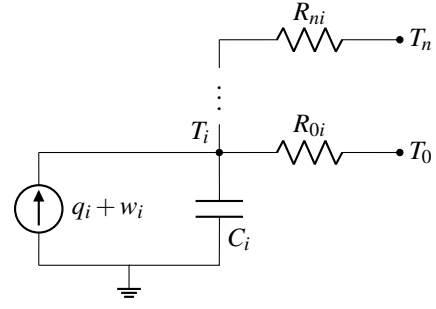
\begin{figure}[t]
\begin{center}
\begin{circuitikz}[xscale=1.2, yscale=0.75, american currents] 
\ctikzset{bipoles/length=1.05cm}

    \pgfmathsetmacro{\w}{2};
    \pgfmathsetmacro{\h}{1};

    \node[above left] at (2*\w,2*\h) {$T_i$};
    \draw (\w, 2*\h) -- (2*\w,2*\h) to[C,*-] (2*\w,0) -- (\w,0);
    \node[above right] at (2*\w,0.1) {$C_i$};

    \draw (\w,0) to[I,n=I] (\w,2*\h);
    \node[right] at (I.s) {$q_i + w_i$};
    \draw (1.5*\w,0) node[ground] {} to (1.5*\w,0);

    \draw (2*\w,2*\h) to[R,R=$R_{0i}$,-*] (3*\w,2*\h);
    \node[right] at (3*\w,2*\h) {$T_0$};
    
    \draw (2*\w,2*\h) -- (2*\w,2.5*\h);
    \node[rotate=90] at (2*\w,3*\h) {\dots};
    \draw (2*\w,3.5*\h) -- (2*\w,4*\h);
    
    \draw (2*\w,4*\h) to[R,R=$R_{ni}$,-*] (3*\w,4*\h);
    \node[right] at (3*\w,4*\h) {$T_n$};

\end{circuitikz}
\end{center}
\caption{Zone $i$ in a thermal circuit representing an arbitrary $n$-zone building.}
\label{thermalCircuits}
\end{figure}

This paper considers two operating scenarios for the building. In the {\it baseline} scenario, a benchmark control system operates all zones. In the {\it experiment} scenario, an advanced control system takes over operation of only $m$ zones (labeled zones $1$, \dots, $m$ without loss of generality) for some positive integer $m \leq n$, while the other $n - m$ zones (labeled zones $m+1$, \dots, $n$) remain under benchmark control. This setup includes the limiting cases where only $m = 1$ zone is controlled, where all $m = n$ zones are controlled, and all cases in between. Aside from using different controllers in zones $1$, \dots, $m$, the baseline and experiment scenarios are identical. Both scenarios have the same building physics, initial and final states, weather conditions, and occupant behavior. Therefore, the experiment scenario has temperature dynamics
\begin{equation}
\begin{aligned}
C_i \frac{\text d \tilde T_i(t)}{\text d t} &= \sum_{j=0}^n \alpha_{ij} ( \tilde T_j(t) - \tilde T_i(t) ) + \tilde q_i(t) + w_i(t) , \label{experimentDynamics} 
\end{aligned}
\end{equation}
where a variable with a tilde, such as $\tilde T_i(t)$, denotes its value in the experiment scenario.   In the experiment scenario, the advanced control system changes the temperatures $T_1$, \dots, $T_m$ and controlled thermal powers $q_1$, \dots, $q_m$, so $\tilde T_i \neq T_i$ and $\tilde q_i \neq q_i$ for zones $i = 1$, \dots, $m$. Even though every other zone $i = m+1$, \dots, $n$ remains under baseline control at the unperturbed temperature $\tilde T_i = T_i$, the thermal power  $q_i$ in general changes to some perturbed value  $\tilde q_i \neq q_i$ due to altered heat transfer between zone $i$ and zones $1$, \dots, $m$. 

The instantaneous cost associated with delivering thermal power $q_i(t)$ to zone $i$ in the benchmark scenario is $c_i(t) = a_i(t) q_i(t) + b_i(t)$. Similarly, delivering thermal power $\tilde q_i(t)$ to zone $i$ in the experiment scenario costs $\tilde c_i(t) = a_i(t) \tilde q_i(t) + b_i(t)$. The aggregate (over zones) cumulative (over time) cost savings with respect to the benchmark scenario are therefore 
\[
\begin{aligned}
S &\coloneqq \sum_{i=1}^n \int_0^\tau (c_i(t) - \tilde c_i(t)) \text d t = \sum_{i=1}^n \int_0^\tau a_i(t) (q_i(t) - \tilde q_i(t)) \text d t .
\end{aligned}
\]
Depending on the definition of the thermal price $a_i(t)$, the cost $\int_0^\tau c_i(t) \text d t$ can represent a variety of objectives, such as reducing heat or cooling demand, input fuel or electricity, energy costs, emissions, or combinations of these.

\subsection{Theorem and Remarks}
\begin{theorem}
\label{theorem-1}
If the assumptions and model in \S~\ref{model} hold, then reporting only the observed cost savings for the controlled zones 1, \dots, $m$ overestimates the whole-building cost savings by 
\noindent\begingroup\small
\[
\sum_{i=1}^m \sum_{j=m+1}^n \alpha_{ij}  \int_0^\tau a_j(t) (T_i(t) - \tilde T_i(t)) \text d t .
\] \endgroup
To correct this overestimation error, researchers can report either
\noindent\begingroup\small
\[
\sum_{i=1}^m \int_0^\tau \left[ a_i(t) \alpha_{i0} + \sum_{j=1}^n \alpha_{ij} (a_i(t) - a_j(t) ) + C_i a_i(t)  \frac{\text d }{\text d t} \right] (T_i(t) - \tilde T_i(t) ) \text d t 
\] \endgroup
or
\noindent\begingroup\small
\[
\sum_{i=1}^m \int_0^\tau \left[  a_i(t) \alpha_{i0} + \sum_{j=1}^n \alpha_{ij} (a_i(t) - a_j(t) ) - C_i \frac{\text d a_i(t) }{\text d t} \right] (T_i(t) - \tilde T_i(t)) \text d t ,
\] \endgroup
which are two equivalent and exact expressions for the whole-building cost savings.
\end{theorem}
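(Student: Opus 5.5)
The approach is to subtract the experiment dynamics \eqref{experimentDynamics} from the baseline dynamics \eqref{baselineDynamics} zone by zone. This expresses each thermal-power difference $q_i - \tilde q_i$ in terms of temperature differences. Write $\Delta T_i \coloneqq T_i - \tilde T_i$ and $\Delta q_i \coloneqq q_i - \tilde q_i$. The outdoor temperature and the exogenous gains $w_i$ are identical in both scenarios, so $\Delta T_0 = 0$ and the $w_i$ cancel. Subtracting gives
\[
C_i \frac{\text d \Delta T_i}{\text d t} = \sum_{j=0}^n \alpha_{ij}(\Delta T_j - \Delta T_i) + \Delta q_i ,
\]
so that $\Delta q_i = C_i \dot{\Delta T_i} + \alpha_{i0}\Delta T_i + \sum_{j=1}^n \alpha_{ij}(\Delta T_i - \Delta T_j)$.

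For the uncontrolled zones $j = m+1,\dots,n$ we have $\Delta T_j \equiv 0$. This collapses their equation to $\Delta q_j = -\sum_{i=1}^m \alpha_{ji}\Delta T_i$. It also removes all terms involving uncontrolled zones from the controlled-zone equations, except the terms $\alpha_{ij}\Delta T_i$ for $i \le m < j$.

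\textbf{Overestimation error.} The reported savings are $\sum_{i=1}^m \int_0^\tau a_i \Delta q_i \, \text d t$. The whole-building savings $S$ add $\sum_{j=m+1}^n \int_0^\tau a_j \Delta q_j \, \text d t$. Substituting the collapsed expression for $\Delta q_j$ and using the symmetry $\alpha_{ij}=\alpha_{ji}$, this extra term equals
\[
-\sum_{i=1}^m\sum_{j=m+1}^n \alpha_{ij}\int_0^\tau a_j \Delta T_i \, \text d t .
\]
Reported minus true savings is therefore exactly the stated error. This step is immediate.

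\textbf{First corrected formula.} I will insert the expressions for $\Delta q_i$ ($i\le m$) and $\Delta q_j$ ($j>m$) into $S$. The controlled-zone part gives $\sum_{i\le m}\int a_i[\alpha_{i0}\Delta T_i + C_i\dot{\Delta T_i} + \sum_{j=1}^n \alpha_{ij}(\Delta T_i - \Delta T_j)]\,\text d t$. The uncontrolled part contributes $-\sum_{i\le m}\sum_{j>m}\int a_j\alpha_{ij}\Delta T_i \, \text d t$.

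The only delicate bookkeeping is the term $-\sum_{i\le m}\sum_{j\le m} a_i \alpha_{ij}\Delta T_j$. Swapping the indices $i\leftrightarrow j$ and using symmetry turns it into $-\sum_{i\le m}\sum_{j\le m} a_j\alpha_{ij}\Delta T_i$. Combined with the uncontrolled term, this gives $-\sum_{i\le m}\sum_{j=1}^n \alpha_{ij} a_j \Delta T_i$. Collecting terms yields the bracket
\[
a_i\alpha_{i0} + \sum_{j=1}^n \alpha_{ij}(a_i - a_j) + C_i a_i \frac{\text d}{\text d t}
\]
acting on $\Delta T_i$. Note that terms with $j>m$ in $\sum_j \alpha_{ij}\Delta T_i$ are retained correctly, since $\Delta T_j = 0$ only removes the $\Delta T_j$ pieces.

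\textbf{Second formula.} I will integrate $\int_0^\tau C_i a_i \dot{\Delta T_i}\,\text d t$ by parts. The boundary term $C_i[a_i\Delta T_i]_0^\tau$ vanishes because both scenarios share the same initial and final states, so $\Delta T_i(0)=\Delta T_i(\tau)=0$. What remains is $-\int C_i \dot a_i \Delta T_i \, \text d t$.

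I expect the main obstacle to be the index-swapping step. The double sum over controlled pairs must be symmetrized so that every $\Delta T_j$ term is reattributed to a controlled zone's own $\Delta T_i$. The rest is routine, assuming $a_i$ is differentiable for the second form.
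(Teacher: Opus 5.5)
Your proposal is correct and takes essentially the same route as the paper. You subtract the two dynamics, use $T_j - \tilde T_j \equiv 0$ for $j=0$ and $j>m$, swap the controlled-pair double sum via $\alpha_{ij}=\alpha_{ji}$, and integrate by parts with vanishing boundary terms. The only differences are ordering and bookkeeping: the paper integrates by parts first to reach the $-C_i\,\mathrm{d}a_i/\mathrm{d}t$ form, then recovers the $C_i a_i\,\mathrm{d}/\mathrm{d}t$ form from the integration-by-parts identity, and it obtains the error as $-\sum_{i>m} S_i$ rather than from your collapsed expression for $\Delta q_j$.
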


A simple two-zone example illustrates the implications and implementation details of Theorem~\ref{theorem-1}. If the advanced control system controls only zone $m = 1$ in a building with $n = 2$ zones, and if the thermal prices are spatially uniform, then the savings overestimation error is
\[
\alpha_{12} \int_0^\tau a_1(t) (T_1(t) - \tilde T_1(t)) \text d t 
\]
and the true savings are
\[
\int_0^\tau \left(  a_1(t) \alpha_{10} - C_1 \frac{\text d a_1(t) }{\text d t} \right) (T_1(t) - \tilde T_1(t)) \text d t .
\]
In the limit of slowly-changing prices, d$a_1(t)/$d$t \rightarrow 0$ and the true savings simplify to
\[
\alpha_{10}  \int_0^\tau a_1(t) (T_1(t) - \tilde T_1(t)) \text d t .
\]
In this limit, the relative overestimation error,
\[
e \coloneqq \frac{ \text{overestimation error} }{ \text{true savings} } = \frac{ \alpha_{12} }{ \alpha_{10} } ,
\]
is independent of the temperatures and prices. In this limit, the estimation error increases if the thermal coupling strength $\alpha_{12}$ between the controlled zone and the adjacent uncontrolled zone increases; or if the coupling strength $\alpha_{10}$ between the controlled zone and the outdoors decreases. In the limit $\alpha_{10} \rightarrow 0$, meaning there is no direct coupling between the controlled zone and the outdoors, the overestimation error goes to infinity. This implies that when controlling only a purely interior zone, all perceived savings are fictitious.

The thermal conductance between zones 1 and 2 can be written as $\alpha_{12}=U_{12}A_{12}$, where $U_{12}$ (kW/[$^\circ$C\,m$^2$]) is the overall heat transfer coefficient of the surface separating Zone~1 from Zone~2, and $A_{12}$ (m$^2$) is the surface area. Letting $\beta = U_{12} / U_{10}$ and $\gamma = A_{12}/A_{10}$, the relative error is
\[
e = \frac{U_{12} A_{12}}{U_{10} A_{10}} =  \beta \gamma .
\]
For example, a box-shaped zone with a square footprint, adiabatic floor and ceiling, and $\ell$ exterior walls has $4-\ell$ interior walls, so $\gamma = 4/\ell - 1$ and $e = (4/\ell - 1) \beta$. If Zone~2 entirely surrounds Zone~1, meaning the controlled zone is an interior zone, then $\ell = 0$, $e = \infty$, and advanced control achieves no savings. At the other extreme, if Zone~1 shares no walls with Zone~2, meaning the two zones are effectively separate buildings, then $\ell = 4$, $e = 0$, and savings estimates are exact. Between these two extremes, the relative error scales like $1/\ell$. With $\ell = 2$ exterior walls, for example, the relative error is $e = \beta$. If the exterior walls have double the insulation of the interior walls, then $\beta = 2$ and the relative error is 200\%. In other words, the true savings are one-third of the reported savings. Fig.~\ref{square} shows how the relative savings overestimation error scales with $\gamma$ and $\beta$.

\begin{figure}[t]
\begin{center}
\includegraphics[width=0.39\textwidth]{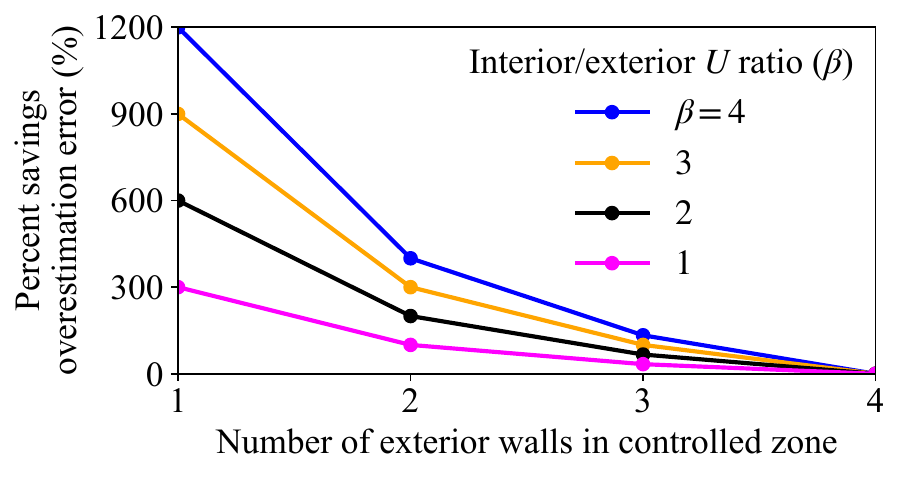}
\end{center}
\caption{Savings overestimation errors for a square zone with adiabatic floor and ceiling. Errors are larger for controlled zones with less exterior surface area, more exterior insulation, or less interior insulation.}
\label{square}
\end{figure}

\section{Numerical Example}
\label{numerical-example}

\subsection{Setting, Building, and Input Data}
This section illustrates the themes in this paper through the example of heating a two-zone building in West Lafayette, Indiana, USA, during a five-day cold snap in late December of 2022. Fig.~\ref{floorPlan} shows the example building's floor plan. The advanced control system operates only Zone~1, which comprises one-quarter of the building's floor area. The default control system maintains baseline operations in the other three-quarters of the building (Zone~2).

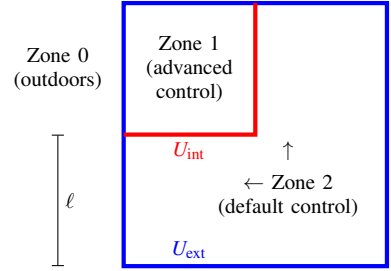
\begin{figure}[t]
\begin{center}
\footnotesize
\begin{tikzpicture}[scale=1.75]	

	\draw[ultra thick,blue] (0,0) rectangle (2,2); 
	\draw[ultra thick,red] (0,1) -- (1,1) -- (1,2); 
	
	\node[align=center] at (0.5,1.5) {Zone~1 \\ (advanced \\ control)};
	\node[above, align=center] at (1.25,0.75) {$\uparrow$};
	\node[below, align=center] at (1.25,0.75) {$\leftarrow$ Zone~2 \\ (default control)};
	\node[align=center] at (-0.5,1.5) {Zone~0 \\ (outdoors)};
	
	\node[above,blue] at (0.5,0) {$U_\text{ext}$};
	\node[below,red] at (0.5,1) {$U_\text{int}$};
	
	\draw[|-|] (-0.5,0) -- (-0.5,1);
	\node[right] at (-0.5,0.5) {$\ell$};
			
\end{tikzpicture}
\end{center}
\caption{Advanced control system operates only Zone~1 within the two-zone building. {\color{blue} Exterior} walls have better insulation than {\color{red} interior} walls.}
\label{floorPlan}
\end{figure}

The floors and ceilings are assumed to be perfectly insulated. The walls are $h = 3$~m tall and $\ell =$ 5~m long, so Zone~1 has exterior and interior wall areas $A_{10} = A_{12} = 2 \ell h = 30$~m$^2$. Zone~2 has exterior and interior wall areas $A_{20} = 6 \ell h =$ 90~m$^2$ and $A_{21} = A_{12}$. The interior and exterior walls have overall heat transfer coefficients $U_\text{int} = 3$~W/($^\circ$C\,m$^2$) and $U_\text{ext} = 1.5$~W/($^\circ$C\,m$^2$). The thermal conductances are therefore $\alpha_{10} = A_{10} U_\text{ext} = 45$~W/$^\circ$C, $\alpha_{12} = \alpha_{21} = A_{12} U_\text{int} =$~90 W/$^\circ$C, and $\alpha_{20} = A_{20} U_\text{ext} = 135$~W/$^\circ$C. 

The thermal capacitance of Zone $i$ comes from multiplying the density $\rho = 1.293$~kg/m$^3$, specific heat at constant pressure $c_p = 2.792 \times 10^{-4}$~kWh/($^\circ$C\,kg), and volume $V_i$ of air in the zone, then increasing the result by a factor of ten to account for ducts and other matter in tight thermal contact with the air. With the zone volumes $V_1 = \ell^2 h = 75$~m$^3$ and $V_2 = 3 \ell^2 h =$ 225~m$^3$, this process produces thermal capacitances $C_1 = 0.27$~kWh/$^\circ$C and $C_2 = 0.81$~kWh/$^\circ$C. 

In the baseline scenario, the benchmark control system holds the zones at constant temperatures $T_1$ and $T_2$, so solving the dynamics~\eqref{baselineDynamics} for the controlled thermal powers gives
\[
\begin{aligned}
q_1(t)  &= \alpha_{10} ( T_1 - T_0(t) ) + \alpha_{12} ( T_1 - T_2 ) - w_1(t) \\
q_2(t)  &= \alpha_{20} ( T_2 - T_0(t) ) + \alpha_{12} ( T_2 - T_1 ) - w_2(t) . \\
\end{aligned}
\]
In the experiment scenario, the benchmark control system maintains Zone~2 at constant temperature $T_2$, but the advanced control system perturbs the Zone~1 temperature to some dynamic value $\tilde T_1(t) \neq T_1$. Therefore, the dynamics~\eqref{experimentDynamics} reduce to
\noindent\begingroup\small
\[
\begin{aligned}
C_1 \frac{\text d \tilde T_1(t)}{\text d t} &= \alpha_{10} ( T_0(t) - \tilde T_1(t) ) + \alpha_{12} ( T_2 - \tilde T_1(t) ) + \tilde q_1(t) + w_1(t) \\
\tilde q_2(t) &= \alpha_{20} ( T_2 - T_0(t) ) + \alpha_{12} ( T_2 - \tilde T_1(t) ) - w_2(t) . \\
\end{aligned}
\] \endgroup
Assuming $T_0$, $\tilde q_1$, and $w_1$ are piecewise constant over discrete time steps of duration $\Delta t$ (h), the Zone~1 dynamics discretize exactly to
\noindent\begingroup\small
\begin{align}
&\tilde T_1(k+1) = \lambda \tilde T_1(k)  \nonumber \\
&+ ( 1 - \lambda) \big( \tilde q_1(k) + \alpha_{10} T_0(k) + \alpha_{12} T_2 + w_1(k) \big) / (\alpha_{10} + \alpha_{12} ) ,
\label{exampleDynamics}
\end{align} \endgroup
where $\lambda = \exp( -\Delta t (\alpha_{10} + \alpha_{12}) / C_1 )$ and the integer $k$ indexes time steps.

The exogenous thermal powers $w_1(t)$ and $w_2(t)$ come from the sun, electronics, body heat, and other sources. The simulations that follow use solar heat gains obtained by rescaling the global solar irradiance on a horizontal surface (available from historical weather measurements) to a mean of 10~W/m$^2$, then multiplying by the window area, assuming a window-to-wall ratio of 25\%. This results in peak solar heat gains of 0.47 and 1.4~kW for Zones 1 and 2. The solar heat gains combine with other internal heat gains averaging 10~W/m$^2$, normalized by floor area and perturbed with white Gaussian noise, to form $w_1(t)$ and $w_2(t)$.

\begin{figure}[t]
\begin{center}
\includegraphics[width=0.49\textwidth]{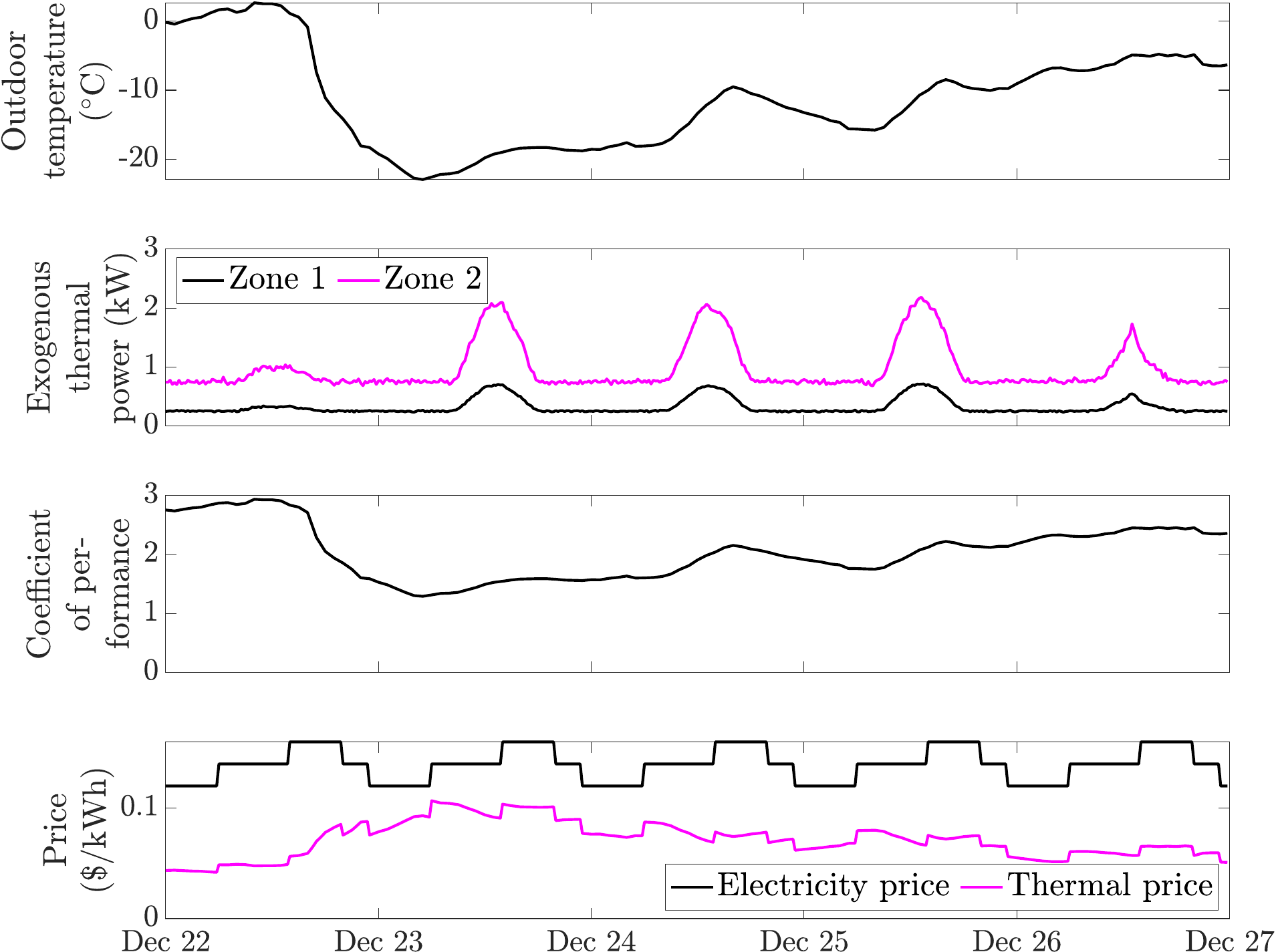}
\end{center}
\caption{Simulation input data represent five cold days in West Lafayette, Indiana, from December of 2022.}
\label{inputs}
\end{figure}

The advanced control system seeks to minimize the cost of input energy to the electric heat pump that heats the building. The electricity price $\pi(t)$ varies with time following a rate plan from a large utility in Chicago, Illinois. The price is 0.12~\$/kWh from 10 PM to 6 AM, 0.14~\$/kWh from 6 AM to 2 PM and 7 to 10 PM, and 0.16 \$/kWh from 2 to 7 PM. The heat pump's coefficient of performance $\eta(t)$ also varies with time due to its dependence on the outdoor temperature. The coefficient of performance scales linearly between 1.8 at -15~$^\circ$C and 3.3 at 8.3~$^\circ$C. The thermal price $a(t) = \pi(t) / \eta(t)$ is the same for Zones 1 and 2. The optimization problem is solved in MATLAB using the CVX toolbox \cite{Grant2008CVX}.

Fig.~\ref{inputs} shows the input data for five cold days in December 2022. The outdoor air temperature (top plot) drops as low as -23~$^\circ$C. The exogenous thermal powers (second plot) are larger for Zone~2 (black curve), which has triple the floor area of Zone~1 (magenta curve). The heat pump's coefficient of performance (third plot) drops in cold weather. The electricity price (black curve) follows the time-of-day rate plan. The effective thermal price (magenta curve) is the ratio of the electricity price to the heat pump coefficient of performance.

In the simulations that follow, the advanced control system solves a convex optimization problem for the decision variables $\tilde T_1(0)$, \dots, $\tilde T_1(K)$ and $\tilde q_1(0)$, \dots, $\tilde q_1(K-1)$. The objective function is the cumulative electricity cost,
\[
\Delta t \sum_{k=0}^{K-1} a(k) \tilde q_1(k) = \Delta t \sum_{k=0}^{K-1} \frac{\pi(k) \tilde q_1(k)}{\eta(k)} .
\]
The dynamics~\eqref{exampleDynamics} enter the optimization problem as an equality constraint for each $k = 0$, \dots, $K-1$. The optimization problem also imposes the equality constraints 
\[
\tilde T_1(0) = \tilde T_1(K) = T_1
\]
to ensure that Zone~1 begins and ends in the same state in the simulated experiment as it would under baseline control. The thermal powers $\tilde q_1(0)$, \dots, $\tilde q_1(K-1)$ must be nonnegative. The Zone~1 air temperature $\tilde T_1(k)$ must deviate no further than $\delta(k)$ from the baseline temperature,
\[
\abs{ \tilde T_1(k) - T_1} \leq \delta(k) .
\]
The optimization permits temperature deviations of $\delta(k) = 1$ $^\circ$C while occupants are at home and awake (6 to 9 AM and 6 to 10 PM) and of $\delta(k) = 2$~$^\circ$C otherwise.

\subsection{Simulation Results}

Fig.~\ref{results} shows the simulation results. The left column of plots shows results for Zone~1 and the right column for Zone~2. The top row of plots shows the indoor temperatures under the baseline (black curves) and experiment (magenta curves) scenarios, as well as the temperature constraints (dashed red curves). The Zone~2 temperatures are identical under the baseline and experiment scenarios. In Zone~1, however, the advanced control system perturbs the temperature away from the baseline during the experiment scenario. The Zone~1 temperature drops to the minimum acceptable value at most time steps, but preheats occasionally in advance of large step-increases of the thermal price $a(k)$ (the magenta curve in the bottom plot of Fig.~\ref{inputs}).

Lowering the Zone~1 temperature at most time steps reduces the heat demand of Zone~1 in the experiment scenario (magenta curve, middle left plot) relative to the baseline scenario (black curve). This reduces the cost of input electricity to the heat pump (magenta curve, lower left plot) relative to the baseline (black curve). Over the five days, heating Zone~1 costs \$8.34 in the experiment scenario and \$10.17 in the baseline scenario. Looking only at Zone~1 would suggest savings of \$1.83 (18\%).

However, lowering the Zone~1 temperature also causes heat to flow from Zone~2 to Zone~1. This increases the heat demand of Zone~2 (magenta curve, middle right plot) relative to the baseline scenario (black curve). This, in turn, increases the cost of input electricity to the heat pump in the experiment scenario (magenta curve, lower right plot) relative to the baseline scenario (black curve). Over the five days, heating Zone~2 costs \$31.70 in the experiment scenario and \$30.49 in the baseline scenario. Advanced control of Zone~1 {\it increases} the cost of heating Zone~2 by \$1.21 (4\%). 

Accounting for both the \$1.83 cost savings in Zone~1 and the \$1.21 cost increase in Zone~2, the total cost of heating the whole building is \$40.66 in the baseline scenario and \$40.04 in the experiment scenario. The true whole-building savings are therefore \$0.62. The \$1.83 savings suggested by looking only at Zone~1 overestimate the true \$0.62 whole-building savings by \$1.21 (195\%). This agrees closely with the overestimation predicted by the theorem, $\alpha_{12} / \alpha_{10} = 200$\%. Table~\ref{resultsTable} summarizes these results.

\begin{figure}[t]
\begin{center}
\includegraphics[width=0.49\textwidth]{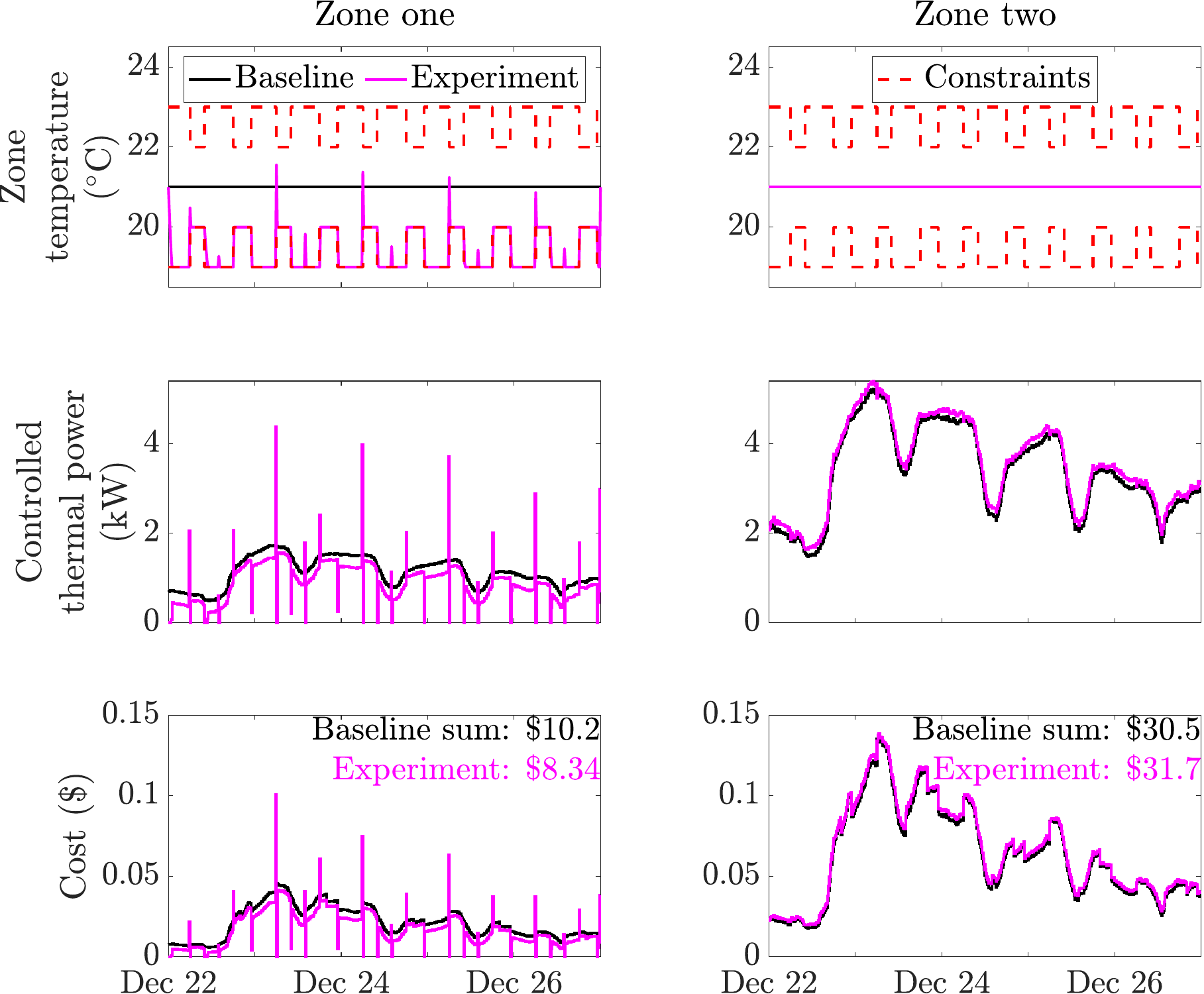}
\end{center}
\caption{Zone temperatures (top row), heat pump power output (middle), and electricity cost (bottom) for Zone~1 under advanced control (left column) and for Zone~2 under default control (right column). While advanced control decreases energy costs in Zone~1, it also increases energy costs in Zone~2.}
\label{results}
\end{figure}

\section{Limitations and Future Work}
\label{limitations}

The scope of this paper only included settings where heat transfer is linear in temperature differences and control objectives are linear in thermal load. While heat transfer is not always linear in temperature differences -- for example, radiative heat transfer follows differences in the fourth powers of temperatures -- it is approximately linear in many applications of practical interest. Many important control objectives are also linear or approximately linear in thermal load, such as input energy to HVAC equipment, the cost of input energy, and emissions associated with input energy. However, some important objectives, such as reducing peak electricity demand, are not linear in thermal load. When the cost does not have the form $c_i(t) = a_i(t) q_i(t) + b_i(t)$, the savings $S$ are not linear in the thermal loads $q_i(t)$ and the proof of Theorem \ref{theorem-1} does not hold. Future work could investigate nonlinear objectives such as peak shaving.

The proposed M\&V procedure requires the thermal conductances $\alpha_{ij}$, which can be challenging to estimate accurately. An important direction for future work is to investigate how uncertainty in the $\alpha_{ij}$ affects estimation of the true savings, and to develop mitigation strategies that reduce sensitivity to this uncertainty. Another future direction could compare zone-level savings estimates to `true' savings in the optimistic benchmark of applying advanced control to the whole building. Ultimately, the theory in this paper should be validated through field experiments. A simple demonstration case could be a two-story house where advanced control is applied only to one floor, while the adjacent floor remains under baseline control.

\begin{table}[t]
\caption{Energy costs and savings in the numerical example}
\begin{center}
\renewcommand{\arraystretch}{1.1}

\begin{tabular}{ l | l | l | l }
& Zone~1 & Zone~2 & Whole building \\
\hline
Baseline cost & \$10.17 & \$30.49 & \$40.66 \\
Experiment cost & \$8.34 & \$31.70 & \$40.04 \\
Perceived zone-level savings & {\color{red} \$1.83} & $-\$1.21$ & -- \\
True savings & -- & -- & {\color{blue} \$0.62 } \\

\end{tabular}
\end{center}
\label{resultsTable}
\end{table}

\section{Conclusion}
\label{conclusion}
This paper highlighted a problem with measuring and verifying savings from small HVAC control demonstrations in larger multi-zone buildings. The problem involves heat transfer between controlled zones and adjacent zones that, if neglected, can lead to large savings overestimation errors. This paper proposed a new measurement and verification procedure that accounts for heat transfer with adjacent zones. Surprisingly, the new procedure does not require estimating how much energy a building would have used under baseline operation, relying instead on readily available zone temperature measurements. This paper's contributions could make field demonstrations of advanced HVAC control more trustworthy. Improved trustworthiness of field demonstration results could help convince decision-makers in government and business that the economic case for advanced HVAC control is attractive, which could accelerate real-world adoption and thereby reduce costs and emissions from HVAC.

\appendix
\label{proofTheorem}
\textbf{Proof of Theorem~\ref{theorem-1}.}
Subtracting the experiment dynamics~\eqref{experimentDynamics} from the baseline dynamics~\eqref{baselineDynamics} gives
\noindent\begingroup\small
\[
\begin{aligned}
C_i \frac{\text d x_i(t) }{\text d t} &= \sum_{j=0}^n \alpha_{ij} ( x_j(t) - x_i(t) ) + q_i(t) - \tilde q_i(t) ,
\end{aligned}
\] \endgroup
where $x_i(t) \coloneqq T_i(t) - \tilde T_i(t)$ is the temperature reduction in zone $i$ relative to the benchmark scenario. The cumulative cost savings in zone $i$ are therefore
\noindent\begingroup\small
\[
\begin{aligned}
S_i &\coloneqq \int_0^\tau a_i(t) ( q_i(t) - \tilde q_i(t) ) \text d t \\
&= C_i  \int_0^\tau a_i(t) \frac{\text d x_i(t) }{\text d t} \text d t + \sum_{j=0}^n \alpha_{ij}  \int_0^\tau a_i(t) ( x_i(t) - x_j(t) ) \text d t .
\end{aligned}
\] \endgroup
Applying Integration by Parts,
\noindent\begingroup\small
\[
\begin{aligned}
\int_0^\tau a_i(t) \frac{\text d x_i(t) }{\text d t} \text d t &= a_i(\tau) x_i(\tau)  - a_i(0) x_i(0)  - \int_0^\tau \frac{\text d a_i(t) }{\text d t} x_i(t) \text d t .
\end{aligned}
\] \endgroup
The building has the same initial and final states in the experiment scenario as it does in the benchmark scenario, so $x_i(0) = x_i(\tau) = 0$ and the boundary terms vanish. Hence,
\noindent\begingroup\small
\begin{equation}
\int_0^\tau a_i(t) \frac{\text d x_i(t) }{\text d t} \text d t = - \int_0^\tau \frac{\text d a_i(t) }{\text d t} x_i(t) \text d t \label{IbyP}
\end{equation} \endgroup
Substituting~\eqref{IbyP} into the expression for $S_i$ gives
\noindent\begingroup\small
\[
S_i =   - C_i \int_0^\tau \frac{\text d a_i(t) }{\text d t} x_i(t) \text d t + \sum_{j=0}^n \alpha_{ij}  \int_0^\tau a_i(t) ( x_i(t) - x_j(t) ) \text d t .
\] \endgroup
Recalling that $x_0(t) = x_{m+1}(t) = \dots = x_n(t) = 0$ for all $t$,
\noindent\begingroup\small
\[
\begin{aligned}
S_i &= \int_0^\tau \left( - C_i \frac{\text d a_i(t) }{\text d t}  + a_i(t) \sum_{j=0}^n \alpha_{ij}  \right) x_i(t) \text d t \\
&- \sum_{j=1}^m \alpha_{ij}  \int_0^\tau a_i(t) x_j(t) \text d t \\
\end{aligned}
\] \endgroup
and
\noindent\begingroup\small
\[
\begin{aligned}
S \coloneqq \sum_{i=1}^n S_i 
&= \sum_{i=1}^m \int_0^\tau \left( - C_i \frac{\text d a_i(t) }{\text d t}  + a_i(t) \sum_{j=0}^n \alpha_{ij}  \right) x_i(t) \text d t \\
&- \sum_{j=1}^n \sum_{i=1}^m \alpha_{ji}  \int_0^\tau a_j(t) x_i(t) \text d t .
\end{aligned}
\] \endgroup
Swapping the order of the sums in the second term and recalling that $\alpha_{ij} = \alpha_{ji}$ by symmetry,
\noindent\begingroup\small
\[
\begin{aligned}
S &= \sum_{i=1}^m \left[ \int_0^\tau \left( - C_i \frac{\text d a_i(t) }{\text d t}  + a_i(t) \sum_{j=0}^n \alpha_{ij}  \right) x_i(t) \text d t \right. \\
&\left. - \sum_{j=1}^n \alpha_{ij}  \int_0^\tau a_j(t) x_i(t) \text d t \right] \\
&= \sum_{i=1}^m \int_0^\tau \left[  a_i(t) \alpha_{i0} - C_i \frac{\text d a_i(t) }{\text d t} + \sum_{j=1}^n \alpha_{ij} (a_i(t) - a_j(t) ) \right] x_i(t) \text d t . \\
\end{aligned}
\] \endgroup
This exact expression for the total energy savings, along with~\eqref{IbyP}, gives the second statement of the theorem. The overestimation error from reporting savings only in the controlled zones $1$, \dots, $m$, rather than all zones, is
\noindent\begingroup\small
\[
\begin{aligned}
\sum_{i=1}^m S_i - S  &= - \sum_{i=m+1}^n S_i \\
&= - \sum_{i=m+1}^n \left[ \int_0^\tau \left( - C_i \frac{\text d a_i(t) }{\text d t}  + a_i(t) \sum_{j=0}^n \alpha_{ij}  \right) \cancel{ x_i(t) } \text d t \right. \\
&\left. - \sum_{j=1}^m \alpha_{ij}  \int_0^\tau a_i(t) x_j(t) \text d t \right] \\
&= \sum_{i=m+1}^n \sum_{j=1}^m \alpha_{ij}  \int_0^\tau a_i(t) x_j(t) \text d t \\
&= \sum_{i=1}^m \sum_{j=m+1}^n \alpha_{ij}  \int_0^\tau a_j(t) x_i(t) \text d t , 
\end{aligned}
\] \endgroup
where the cancellation of $x_i(t)$ in the second line follows because $x_{m+1}(t) = \dots = x_n(t) = 0$ for all $t \in [0, \tau]$. This exact expression for the overestimation error concludes the proof.

\section*{Acknowledgements}

The authors gratefully acknowledge support for AJK from the American Society of Heating, Refrigeration, and Air-Conditioning Engineers (ASHRAE) Graduate Student Grant-in-Aid Award and thank James E. Braun and Gregor P. Henze for helpful discussion.

{\footnotesize
\bibliographystyle{IEEEtran}
\bibliography{refs}
}

\end{document}